\newcommand{\mup}{\mu \text{PMU}}
\newcommand{\bs}{\boldsymbol} 
\newcommand{\mb}{\mathbf}
\newtheorem{prop}{Proposition}
\begin{document}

\title{Anomaly Detection Using Optimally-Placed $\mup$ Sensors in Distribution Grids}

\author{Mahdi Jamei, Anna Scaglione, Ciaran Roberts, Emma Stewart, \\ Sean Peisert, Chuck McParland, Alex McEachern   
    \thanks{This research was supported in part by the Director, Office of Electricity Delivery and Energy Reliability, Cybersecurity for Energy Delivery Systems program, of the U.S. Department of Energy, under contracts DE-AC02-05CH11231 and DEOE0000780. Any opinions, and findings expressed in this material are those of the authors and do not necessarily reflect those of the sponsors.). 
Preliminary version of this work has been accepted to be published in the proceedings of HICSS 2017, Hawaii, USA \cite{jamei2016automated}.}
        \thanks{M. Jamei and A. Scaglione are with the School of Electrical, Computer and Energy Engineering, Arizona State University, Tempe, AZ, USA. Emails: \texttt{\{mjamei, ascaglio\}@asu.edu}. C. Roberts, E. Stewart, S. Peisert, and C. McParland are with the Lawrence Berkeley National Laboratory, Berkeley, CA, USA. Emails: \texttt{\{cmroberts, estewart, sppeisert, cpmcparland\}@lbl.gov}. A. McEachern is the CEO of the Power Standards Lab, Alameda, CA, USA. Email: \texttt{Alex@McEachern.com}.}
         }
\maketitle

\begin{abstract}
As the distribution grid moves toward a tightly-monitored network, it is important to automate the analysis of the enormous amount of data produced by the sensors to increase the operators situational awareness about the system. 
In this paper, focusing on Micro-Phasor Measurement Unit ($\mup$) data, we propose a hierarchical architecture for monitoring the grid and establish a set of analytics and sensor fusion primitives for the detection of abnormal behavior in the control perimeter.    
Due to the key role of the $\mup$ devices in our architecture, a source-constrained optimal $\mup$ placement is also described that finds the best location of the devices with respect to our rules. The effectiveness of the proposed methods are tested through the synthetic and real $\mup$ data.
%
\end{abstract}
\begin{IEEEkeywords}
 \normalfont\bfseries Distribution Grid, Micro-Phasor Measurement Unit ($\mup$), Anomaly Detection, Optimal Placement.
\end{IEEEkeywords}
\IEEEpeerreviewmaketitle
\section{Introduction}
Historically, Distribution System Operators (DSOs) have lacked information about the real-time operation of the grid. However, this situation is changing. Similar to the Phasor Measurement Units (PMUs) at the transmission level, Micro-Phasor Measurement Units ($\mup$s) are designed to measure real-time voltage and current phasors in a synchronized manner at specific locations on the distribution system. These measurements can be utilized by the DSO, but useful translation from raw data to information is first required \cite{von2014micro, scoping_study}. This information can provide a level of insight that is not attainable by Distribution Supervisory Control and Data Acquisition (DSCADA) data. Visualizing and interpreting raw sensor streams can be overwhelming for DSOs, considering the large quantity of data that could flow in from different parts of the grid \cite{kezunovic2013role}. Therefore, it is essential to mine the data collected with analytic tools that can derive informative measurements and form automated reports.
\subsection{Literature Review} 
Recent work has focused on the PMU data utilization at the transmission level to improve the Wide-Area Monitoring, Protection, and Control \cite{phadke2008wide,terzija2011wide}. Distribution grids, however, are still lagging in that respect, since tools for the transmission grid may not be directly applicable to the distribution grid due to a different operation environment, such as load imbalances, untransposed lines, and the existence of single-phase and two-phase laterals. Transmission operations and system wide analysis are concerned with large imbalances in load and
generation, and as a result frequency, whereas distribution operations are concerned with localized, but
frequent events such as voltage imbalance, overloading, and outage management. The goal of this paper is to fill the distribution systems analysis gap by defining {\it physics-aware} algorithms that process $\mup$ data for the automatic detection of any grid behavior that does not comply with the (quasi) steady-state regime of operation, as well as the physical limits of the grid.       
Most of the prior research on sensor data analytics (including SCADA and PMU measurements) is concerned with detecting events on the grid transmission level. For example, Pan et al., \cite{pan2015developing} use data mining techniques on PMU measurements and audit logs for event classification. A linear basis expansion for the PMU data is described by Chen et al., \cite{chen2013dimensionality,xie2014dimensionality} for event detection application. A similar approach, based on Principal Component Analysis (PCA), is used in \cite{valenzuela2013real, ge2015power} for event detection and data archival.
Allen et al., \cite{allen2014pmu} describe the use of voltage phasor angle differences between different PMU readings to detect events. Biswal et al., \cite{biswalsupervisory} use the strongest signatures of event in PMU data for situational awareness enhancement. In our previous work \cite{jamei2016micro} we describe an architecture for a cyber-physical distribution grid, and show how event detection using $\mup$ measurements can help in reasoning about the security status of the grid.
A study by Brahma et al., \cite{brahmareal} describes the real-time dynamic event identification in power system using PMU data based on a data-driven and also physics-based method.
\subsection{Our Contribution}
Our approach for event detection is a combination of the data-driven methods, as well as criteria  resulting from analyzing the underlying physical model of the system.
We first define a hierarchical \textit{``anomaly detection architecture''} to host a set of anomaly detection rules that are proposed for the analysis and fusion of $\mup$ sensor measurements. Unlike model agnostic machine learning methods that look for statistical anomalies in a feature space that is often heuristic,
the \textit{anomalies} that algorithms in this paper identify are defined in the context of power quality and protection, in addition to what is imposed by the grid governing physical equations (i.e. Kirchhoff Voltage and Current Laws). This, in turn, gives a DSO much greater insight and help in the forensic analysis. Because of the important role played by $\mup$ data in the framework, a source-constrained $\mup$ placement optimization is described, to achieve the maximum sensitivity in detecting a change\footnote{Note that PMU placement methods studies for state estimation
are not applicable here since the state is not directly observable by solely depending on the $\mup$s due to the scarcity of sensors.}. 

\subsection{Notation}
The following notations are used throughout the paper:
\begin{tabular}{ll}
$j$&Imaginary unit.\\
$\mathcal{I}_N$& $N \times N$ identity matrix.\\
$\mb{A}^T, \mb{A}^*,$&Transpose, conjugate,\\
$\mb{A}^H$&and conjugate transpose of matrix $\mb{A}$.\\
$||\mb{A}||,||\mb{A}||_F$&2-norm and F-norm of matrix $\mb{A}$.\\ 
$\mb{1}_{m \times n}$&$m \times n$ size matrix with entries 1.\\
$\mb{A}^\dagger$&Pseudo-inverse of matrix $\mb{A}$.\\
$\otimes$&Kronecker product.  
\end{tabular}

\section{Anomaly Detection Architecture}
As shown in Fig.~\ref{fig.anomaly_arch} the first stage of the  \textit{``anomaly detection architecture''} hierarchy is next to each $\mup$ sensor in the field and the second stage is at a central node that can be hosted in the Distribution System Operator (DSO) control room. The algorithms for anomaly detection applied next to each $\mup$ sensor are referred to as  
\textit{``local rules''} and those that aggregate readings of multiple $\mup$s as \textit{``central rules''}.         
\begin{figure}[ht]
\centering
\includegraphics[trim = 0mm 0mm 0mm 0mm,width=0.8\linewidth]{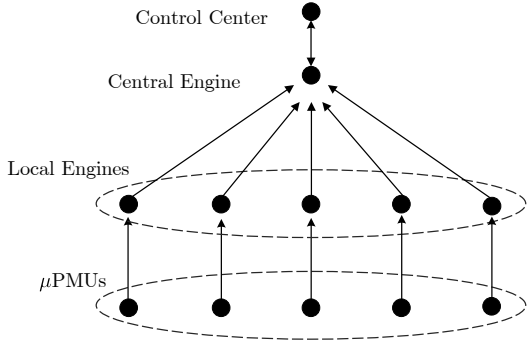}
\caption{Anomaly Detection Architecture Using $\mup$ Measurements.}
\label{fig.anomaly_arch}
\end{figure}
For large grid sizes, the aggregation can occur in multiple steps, where mid-level stages analyze part of the data and forward them upward.

This design has the following benefits: (1) it is robust due to its distributed, redundant nature; (2) it can be inexpensively deployed at existing utilities; (3) it analyzes the data close to real-time; as opposed to the common practicing of post hoc auditing of logged data; (4) the employed $\mup$s are placed in separate protection domains---including a separate communications network---from the DSCADA systems, so that attacks against the DSCADA network will not compromise the operation of the architecture. In addition, $\mup$s can be seamlessly added to the network via wireless modems. Further addressing security, the $\mup$s are not only designed to be read-only  via the network, but are both placed behind networked ``bastion hosts'' and can communicate bi-directionally via encrypted protocols, thereby fortifying the devices against tampering without physical access; (5) finally, even if a data injection attack can be launched against some $\mup$s, depending on level that is compromised (from field devices up to the control center), some $\mup$s and some of our rules may still be effective, due to the fact that every device in a certain level processes information independently. Thus, in our threat model, it is assumed that our detection rules are less-vulnerable to data injection attacks than attacks against typical DSCADA networks. 

The suite of algorithms that are proposed for the \textit{anomaly detection architecture} layers have the following advantages in comparison to the present state of the art: (1) due to the near-real-time analysis, analytic results can be used to prioritize the traffic flow from the lower to higher layer, pushing forward reports of anomalies faster than data that do not raise a flag and need to simply be accrued for historical purposes;
(2) it employs three-phase distribution grid equations rather than the more commonly-used positive sequence solution, thus avoiding the errors arising due to poor modeling; (3) a quasi steady-state condition is considered as the {\it normal regime} of operation rather than the idealistic steady-state which assumes there is no frequency drift. 
These modeling aspects are clarified in our discussion next.

\section{The $\mup$ Data in a Distribution Grid}\label{sec.model}       
Fig.~\ref{fig:dis_line} shows the $\pi$ model of a distribution line that connects bus $m$ to $n$.   
 \begin{figure}[ht]
 \centering
	\includegraphics[width=0.5\textwidth]{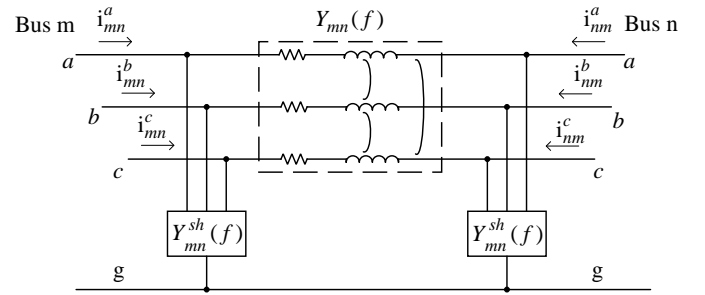}
	\caption{\small $\pi$ Model of a Distribution Line}
	\vspace{-0.2cm}
	\label{fig:dis_line}
\end{figure} 
Assuming normal conditions, the $\mup$s are designed to capture samples of the three phase voltage phasor, which is denoted by $\mb{v}[k]\in \mathbb{C}^{3\times 1}$, and of the current phasor $\mb{i}[k]\in \mathbb{C}^{3\times 1}$ in specific sites of a distribution grid. Next we apply Kirchhoff's and Ohm's law for a three-phase line in a quasi steady-state condition. 

A well-known fact from signals and systems theory is that the relationship between voltage and current through a passive circuit with a certain admittance matrix can be represented as a multiplication in the frequency domain and as a convolution in the time domain. Because the circuit is three phase, these will be represented by a Multi-Input Multi-Output system. This fact also holds for the phasor of the signals. We first define $\bs{y}_{mn}(t)$ and $\bs{y}^{sh}_{mn}(t)$ to denote the time domain equivalents of the matrices $\bs{Y}_{mn}(f+f_0)$ and $\bs{Y}^{sh}_{mn}(f+f_0)$, respectively.  
In discrete time, the convolution relationship is as follows:
\begin{align}
 \mathbf{i}_{mn}[k]=\sum_{r=0}^{N-1} \overline{\bs y}_{mn}[r] \mathbf{v}_m[k-r]-\bs{y}_{mn}[r]\mathbf{v}_n[k-r]
\label{eq:basiceq}
\end{align}
where $\overline{\bs y}_{mn}[r]=\bs{y}_{mn}[r]+\bs{y}^{sh}_{mn}[r]$ and it is assumed that $\bs{y}^{sh}_{ij}[n]$ and $\bs{y}_{mn}[r]$ are the samples respectively of  $\bs{y}^{sh}_{mn}(t)\star h(t)$ and $\bs{y}_{mn}(t)\star h(t)$, have finite support $N$ and are causal\footnote{$h(t)$ denotes the low-pass filter implemented in the $\mup$ to extract the baseband signal and $H(f)$ is its corresponding frequency response.
It should be noted that the outputs of  $\mup$, and their corresponding functions, are not the exact phasors if the bandwidth of the signal (voltage/current) is greater than $2f_0$, and are aliased. However, what we are interested in is to track any data abnormality (instead of the exact grid state during the event). Therefore,  as long as it is not suppressing the anomaly, having aliasing is not an issue for our rules.    
}. 

Next, the form of this relationship during the quasi steady-state is shown, since the steady-state in reality never happens, which in turn represents the governing equations during a regime of operation that is referred to as \textit{``normal''}. 

The fundamental frequency of the voltage and current signals are always varying, even in a normal state (although slowly and over a very small range), because of load-generation imbalances, active power demand interactions, large generators inertia, and the automatic speed controllers of the generators \cite{phadke2008synchronized}. The effect observable is a change of grid operation regime from steady-state to quasi steady-state. 
The off-nominal frequency therefore affects the phase angle captured by phasor measurement devices. To highlight that mathematically, the phasor readings are decomposed ${\mathbf{v}}_m[k]$ and $\mathbf{i}_{mn}[k]$ as follows:
\begin{align}
\mathbf{v}_m[k]=\hat{\mathbf{v}}_m[k] e^{j \beta_m[k]k}, 
~ \mathbf{i}_{mn}[k]=\hat{\mathbf{i}}_{mn}[k] e^{j \beta_m[k]k}
\label{eq:V_k_I_ij}
\end{align}
where $\hat{\mathbf{v}}_m[k]$ is the voltage phasor that is observed at nominal frequency, while $\hat{\mathbf{i}}_{mn}[k]$ is the current phasor after removing the exponential term due to the off-nominal frequency at bus $m$, $\beta_m[k]$, that represents the (time-varying) drift in the frequency induced by the above-mentioned reasons. 

Considering the Discrete Time Fourier Transform, we have:
\begin{align}
\bs{Y}_{mn}(f+f_0)H(f)=rect(T_sf)T_s\sum^{N-1}_{r=0}\bs{y}_{mn}[r]e^{-j2\pi rT_sf}
\end{align}
where $\frac{1}{T_s}=120$Hz is the $\mup$ output rate, and $H(f)$ is the frequency response of the filter $h(t)$. Introducing:
\begin{eqnarray}
{\bs{Y}}_{mn}(f_0,k)&\triangleq& \frac{1}{T_s} \bs{Y}_{mn}\!\!\left(f_0+\frac{\beta_n[k]}{2\pi T_s}\right)H\!\left(\frac{\beta_n[k]}{2\pi T_s}\right),
\label{eq:modulated_Y}
\end{eqnarray}
we have that 
\begin{equation}
{\bs{Y}}_{mn}(f_0,k) =\sum_{r=0}^{N-1}\bs{y}_{ij}[r]e^{-j\beta_n[k] r},
\end{equation}
and similarly $\overline{\bs Y}_{mn}(f_0,k)=\sum_{r=0}^{N-1}\overline{\bs{y}}_{mn}[r]e^{-j\beta_m[k] r}$.

During the quasi steady-state the variations in $\hat{\mathbf{v}}_m[k]$ and $\beta_m[k]$ are negligible over a window of $N$ samples, which means $\hat{\mathbf{v}}_m[k-r]\approx \hat{\mathbf{v}}_m[k]$ and $\beta_m[k-r]\approx \beta_m[k]$. Using this approximation, \eqref{eq:basiceq} can be re-written as:
\begin{align}
\begin{split}
\mathbf{i}_{mn}[k]&
\approx 
\sum_{r=0}^{N-1}\!\overline{\bs{y}}_{mn}[r]\hat{\mathbf{v}}_m[k]e^{j\beta_m[k](k-r)} \\&-\bs{y}_{mn}[r]\hat{\mathbf{v}}_n[k]e^{j\beta_n[k] (k-r)} \\
&=\overline{\bs Y}_{mn}(f_0,k)\mathbf{v}_m[k] -{\bs{Y}}_{mn}(f_0,k) \mathbf{v}_n[k]
\end{split}
\label{eq.i[k]}
\end{align}
%
Equation \eqref{eq.i[k]} is Ohm's law in the phasor domain and is the cornerstone for our transients detection algorithm derived in the following section. 
The analysis above explains through \eqref{eq.i[k]} why in the phasor domain, the equivalent effect of the quasi-steady state conditions is that the admittances \eqref{eq:modulated_Y} fluctuate. The effect is usually modest, because $\beta[k]$ is small. However, during a severe transient or frequency in the order of 10Hz the relationship \eqref{eq.i[k]} with the matrices in \eqref{eq:modulated_Y} does not hold, due to the manifestation of the full dynamic behavior in \eqref{eq:basiceq}.  

In the following, the \textit{``local''} and \textit{``central''} rules are defined leveraging these insights and the knowledge about power system operation.
Note that our rules are set up in a way that all the local engines are agnostic about $\overline{\bs Y}_{mn}(f_0,k)$
and $ \bs Y_{mn}(f_0,k)$, and the sensor siting. However, it is assumed that the central engine knows $\bs{Y}^{sh}_{mn}(f_0,k)|_{\beta[k]=0}$ and $\bs{Y}_{mn}(f_0,k)|_{\beta[k]=0}$ for the lines within its monitoring region and the difference from \eqref{eq:modulated_Y} is treated as equivalent to noise in the observation model. Therefore, when dealing with central rules, we will simply use $\bs{Y}^{sh}_{mn}$ and $\bs{Y}_{mn}$ to refer to
$\bs{Y}^{sh}_{mn}(f_0,k)|_{\beta[k]=0}$ and $\bs{Y}_{mn}(f_0,k)|_{\beta[k]=0}$. 

\section{Data Analysis}
\label{sec:data} 

Our rules monitor for abnormalities in the following quantities 1) voltage magnitude, 2) instantaneous frequency drift, 3) current magnitude, 4) active power, 5) reactive power, and 6) the validity of  quasi steady-state equations. 
The \textit{``local rules''} just require the stream of phasors from a single $\mup$, while the \textit{``central rules''} combine multiple streams across $\mup$s.


%

\subsection{Local Rules}\label{sec.stage-1} 

The \textit{``local rules''} are applied at the lowest layer of the \textit{``anomaly detection architecture''}, and run on systems on or immediately adjacent to the $\mup$s. Their common feature is that they 
require no specific prior knowledge of the grid network parameters. 
\begin{table}[h]
\centering
\caption{Voltage Magnitude Anomalies}
\begin{tabular}{|l|l|}
\hline
\text{anomaly}& \text{signature \tablefootnote{The voltage magnitude is in p.u.}}\\ \hline \hline
\text{voltage sag} & $0.1 \leq |\mathrm{v}| \leq 0.9,~{T_0}/{2} \leq \tau \leq 60 s$\\ \hline
\text{voltage swell} & $1.1 \leq |\mathrm{v}| \leq 1.8 ,~{T_0}/{2} \leq \tau \leq 60 s$\\ \hline
\text{interruption} & $ |\mathrm{v}| < 0.1,~{T_0}/{2} \leq \tau \leq 60 s$\\ \hline
\text{sustained interruption} & $|\mathrm{v}| < 0.1,~\tau > 60 s$\\ \hline
\text{undervoltage} & $0.1 \leq |\mathrm{v}| \leq 0.9,~\tau > 60 s$\\ \hline
\text{overvoltage} & $1.1 \leq |\mathrm{v}| \leq 1.8,~\tau > 60 s$\\ \hline
\end{tabular}
\label{table:vol_event}
\end{table}

\subsubsection{Voltage Magnitude Changes} The magnitude of the voltage varies within a small range that power quality standards enforce during the normal operations \cite{5154067}. Therefore, any large deviation from that range indicates an abnormal condition. Table.~\ref{table:vol_event} lists the anomalies that can be observed in the voltage magnitude labeled by their severity and duration, denoted by $|\mathrm{v}|$ and $\tau$, respectively.




\subsubsection{Current Magnitude, Active, and Reactive Power Changes} 
Suitable quantities to monitor for changes that can be computed locally from the $\mup$ voltage and current phasor measurements $\mathbf{v}_m[k]$ and $\mathbf{i}_{mn}[k]$ are current magnitude, active and reactive powers. 
The three phase apparent power can simply be computed as: 
\begin{align}
\mb{S}_{mn}[k]=\mb{P}_{mn}[k]+j\mb{Q}_{mn}[k]=\text{diag}(\mathbf{v}_m[k])\mathbf{i}^*_{mn}[k]
\label{eq:complex power}
\end{align}
where $\mb{P}_{mn}[k]$ and $\mb{Q}_{mn}[k]$ are the three-phase active and reactive powers, respectively. Note that for the sake of tracking a power flow change in the distribution grid, tracking the active power and reactive power is preferable over monitoring the phase angle difference, since the resistance of the lines is not negligible, and therefore the angle difference does not necessarily indicate the direction of the power flow. We observed this fact in our partner utility grid, when the voltage phase angle at one end of the line was less than the angle at the other end, though the direction of the power flow was not from the higher angle to the lower one.

Even when the voltage magnitude is within the safe range discussed previously, changes in active and reactive power can still happen due to the change of the load, affecting current magnitude and the phase angle between current and voltage phasors. Therefore, it is also of interest to track fast-changes of these quantities using the method described in Section.~\ref{sec:change_det}.

Because physical changes in this class of data on the distribution grid can be potentially indicative of negative behavior, it is important to determine the direction (upward trend, downward trend or oscillation) of the change.
%
The anomalies related to fast changes are labeled in this class of data with {\it  surge}, {\it drop}, and {\it oscillation} for increasing, decreasing, and swinging trends respectively by estimating the slope of the signal around the time of change.

While we have primarily introduced fast state changing events in both the dynamic and transient realms, we must also consider events in the steady state time frame, slower changing yet also potentially critical. An example of this could include a line rating or transformer load being slowly but consistently exceeded leading to accelerated failure. During the quasi steady-state, the three phase current phasor magnitude flowing in each line, i.e., $|\mb{i}_{mn}[k]|$, should be less than or equal the line rated current, $|\mb{i}_{mn}|_{\max}$. This constraint is imposed as feeder limit, and the violation is flagged as \textit{overcurrent}.


\subsubsection{Instantaneous Frequency Changes} For a $\mup$ at bus $m$, we propose to estimate adaptively the instantaneous local frequency deviation from the nominal frequency during the quasi-steady state, using the approach for instance in \cite{xia2012widely} that is tailored to three-phase distribution lines, to isolate abnormal changes in the estimated frequency. 
%
 
\subsubsection{Quasi Steady-State Regime Validity}
As previously noted, when the grid is not in the normal quasi steady-state conditions, the relationship between voltage and the current phasors represents its full dynamic behavior, i.e. the grid is no longer well-approximated by the set of memory-less algebraic equations. Therefore, it is proposed to check the validity of the quasi steady-state regime to flag the presence of transients in the grid. 
At the local engines, \eqref{eq.i[k]} provides the basis for our rule.
 
For the line in Fig.~\ref{fig:dis_line}, assuming that a $\mup$ is installed at bus $m$ means that $\mb{i}_{mn}[k]$, and $\mb{v}_m[k]$ are both available. 
Let $\bs{\alpha}[k]$ be the diagonal matrix such that voltage phasors of bus $m$ and bus $n$, connected via a power line, are related through: 
\begin{align}
\mb{v}_n[k]= \bs{\alpha}[k]\mb{v}_m[k],
\end{align}
Defining:
 \begin{align}
 \bs R^{(mn)}_{\rm iv}[k]&=\frac{1}{M-1}\sum_{r=0}^{M-1} \mb{i}_{mn}[k-r]\mb{v}^H_{m}[k-r],\\
  \bs R^{(nm)}_{\rm vv}[k]&=\frac{1}{M-1}\sum_{r=0}^{M-1} \mb{v}_{n}[k-r]\mb{v}^H_{m}[k-r],
 \end{align}
Assuming that $\bs{\alpha}[k]$ remains constant over a window of $M$ samples in the quasi-steady state, one can write:
 \begin{align}
\bs R^{(nm)}_{\rm vv}[k]\approx \bs{\alpha}[k] \bs R^{(mm)}_{\rm vv}[k]
 \end{align}
It can be assumed that the variation of $\overline{\bs Y}_{mn}(f_0,k)$ is negligible over $M$ samples in normal operation and use \eqref{eq.i[k]} to write:
\begin{align}
\nonumber
\! \! \left( \begin{array}{c:c}
\mathcal{I}_3 &
-\overline{\bs Y}_{mn}(f_0,k)+{\bs Y}_{mn}(f_0,k)\bs{\alpha}[k] 
\end{array}\right)
\underbrace{
 \begin{pmatrix}
  \bs R^{(mn)}_{\rm iv}[k]\\
    \bs R^{(mm)}_{\rm vv}[k]
 \end{pmatrix}
 }_{\bs R^{(mn)}_k}
\approx\bs 0
\label{eq.homog1}
\end{align}
 \begin{prop}
 \label{prop.Rk1}
Correlation matrix $\bs R^{(mn)}_k$ is approximately rank-1 during the quasi steady-state. 
 \end{prop}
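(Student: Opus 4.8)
The plan is to reduce the claim to a single fact — that the windowed voltage correlation $\bs R^{(mm)}_{\rm vv}[k]$ is (approximately) rank one — and to obtain that fact directly from the quasi steady-state decomposition \eqref{eq:V_k_I_ij}. Once $\bs R^{(mm)}_{\rm vv}[k]$ is rank one, the conclusion follows in either of two ways. Route A starts from the homogeneous identity \eqref{eq.homog1}, which already gives $\bs R^{(mn)}_{\rm iv}[k]\approx\big(\overline{\bs Y}_{mn}(f_0,k)-{\bs Y}_{mn}(f_0,k)\bs{\alpha}[k]\big)\bs R^{(mm)}_{\rm vv}[k]$, so that
\begin{align}
\bs R^{(mn)}_k\approx
\begin{pmatrix}\overline{\bs Y}_{mn}(f_0,k)-{\bs Y}_{mn}(f_0,k)\bs{\alpha}[k]\\ \mathcal{I}_3\end{pmatrix}\bs R^{(mm)}_{\rm vv}[k];
\end{align}
the left factor has full column rank $3$ thanks to the $\mathcal{I}_3$ block, hence $\mr{rank}\,\bs R^{(mn)}_k=\mr{rank}\,\bs R^{(mm)}_{\rm vv}[k]$. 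Route B instead proves that $\bs R^{(mn)}_{\rm iv}[k]$ is itself rank one by the same argument used for $\bs R^{(mm)}_{\rm vv}[k]$, and observes that the two blocks share a common right factor, so the stacking in $\bs R^{(mn)}_k$ preserves rank one. I would present Route B, since it is self-contained.

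For the core step I would substitute \eqref{eq:V_k_I_ij} over the window $r=0,\dots,M-1$ and invoke the same quasi steady-state freezes already used to obtain $\bs R^{(nm)}_{\rm vv}[k]\approx\bs{\alpha}[k]\bs R^{(mm)}_{\rm vv}[k]$ and the negligible variation of $\overline{\bs Y}_{mn}(f_0,k)$, namely $\hat{\mb v}_m[k-r]\approx\hat{\mb v}_m[k]$, $\hat{\mb i}_{mn}[k-r]\approx\hat{\mb i}_{mn}[k]$ and $\beta_m[k-r]\approx\beta_m[k]$. Then every sample in the window equals, up to the approximation error, a fixed vector scaled by the common unit-modulus factor $c_r\triangleq e^{j\beta_m[k](k-r)}$: $\mb v_m[k-r]\approx c_r\,\hat{\mb v}_m[k]$ and $\mb i_{mn}[k-r]\approx c_r\,\hat{\mb i}_{mn}[k]$. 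Because $|c_r|=1$, the conjugate-transpose outer products lose their $r$-dependence, $\mb v_m[k-r]\mb v_m^H[k-r]\approx\hat{\mb v}_m[k]\hat{\mb v}_m^H[k]$ and $\mb i_{mn}[k-r]\mb v_m^H[k-r]\approx\hat{\mb i}_{mn}[k]\hat{\mb v}_m^H[k]$, and averaging over the window yields
\begin{align}
\bs R^{(mn)}_k\approx\frac{M}{M-1}
\begin{pmatrix}\hat{\mb i}_{mn}[k]\\ \hat{\mb v}_m[k]\end{pmatrix}\hat{\mb v}_m^H[k],
\end{align}
which is visibly rank one.

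The step I expect to be the real crux — and the one worth spelling out rather than asserting — is precisely \emph{why} the voltage window is rank one: this is not a generic property of $\tfrac{1}{M-1}\sum_r\mb v_m[k-r]\mb v_m^H[k-r]$, which for arbitrary three-phase samples would generically have rank $3$. It holds only because, under the quasi steady-state model \eqref{eq:V_k_I_ij}, the sole time variation of the phasor inside the window is a single scalar off-nominal modulation $e^{j\beta_m[k]k}$ common to all three phases, and that modulation is annihilated by the Hermitian outer product. I would therefore (i) state explicitly the two approximations being used — freezing $\hat{\mb v}_m[k]$, $\hat{\mb i}_{mn}[k]$, $\beta_m[k]$ over the $M$-sample window, and treating the residual $\beta$-dependence of the admittances as observation noise, exactly as in the paragraph preceding the proposition — so that ``approximately rank-1'' is understood as quantifying the error these freezes incur; and (ii) remark that this is the very feature the detector exploits, since during a genuine transient the full dynamics of \eqref{eq:basiceq} reappear, the collinearity of the window breaks, and the resulting departure of $\bs R^{(mn)}_k$ from rank one is exactly what the local rule of Section~\ref{sec.stage-1} flags.
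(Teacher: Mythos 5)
Your proposal is correct and follows essentially the same route as the paper: both arguments reduce the claim to the rank-one property of $\bs R^{(mm)}_{\rm vv}[k]$, obtained because the common unit-modulus off-nominal modulation $e^{j\beta_m[k](k-r)}$ is annihilated by the Hermitian outer product (the paper's $\mb{E}_m[k]\mb{E}_m^H[k]=M\mb{1}_{3\times 3}$ computation is exactly your scalar-$c_r$ observation), and both then propagate this to the stacked matrix via the quasi steady-state relation $\bs R^{(mn)}_{\rm iv}[k]=\tilde{\mb Y}_{mn}(f_0,k)\bs R^{(mm)}_{\rm vv}[k]$. Your Route~A factorization $\bs R^{(mn)}_k=\bigl(\begin{smallmatrix}\tilde{\mb Y}_{mn}\\ \mathcal{I}_3\end{smallmatrix}\bigr)\bs R^{(mm)}_{\rm vv}[k]$ is in fact a slightly cleaner packaging than the paper's detour through $(\bs R^{(mn)}_k)^H\bs R^{(mn)}_k$, since the full-column-rank left factor yields rank \emph{equality} directly, whereas the paper only derives the upper bound $\mathrm{rank}\le 1$ and then silently promotes it to equality.
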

 \begin{proof}
During the quasi-steady state along a distribution line, the following assumptions hold with a very good approximation for $r=0,1,...,M-1$:
\begin{align}
\hat{\mathbf{v}}_m[k-r] \approx \hat{\mathbf{v}}_m[k],~~~\beta_m[k-r] \approx \beta_m[k]
\end{align}  
Therefore, we can write:
\begin{align}
\begin{split}
\bs R^{(mm)}_{\rm vv}[k]&=\frac{1}{M-1}(\mathbf{v}_m[k] \otimes \mb{E}_m[k])(\mathbf{v}^H_m[k]\otimes \mb{E}_m^H[k])\\
&=\frac{1}{M-1}(\mathbf{v}_m[k]\mathbf{v}^H_m[k])\otimes(\mb{E}_m[k]\mb{E}_m^H[k])
\end{split}
\end{align}
where $\mb{E}_m[k]$ is defined as follows and represents the variations due to the off-nominal frequency:
\begin{align}
\mb{E}_m[k]=\mb{1}_{3\times 1}\otimes\begin{pmatrix}
e^{-j\beta_m[k](M-1)}&\ldots&e^{-j\beta_m[k]}&1\\
\end{pmatrix}
\end{align}
which we can then write:
\begin{align}
\mb{E}_m[k]\mb{E}_m^H[k]=(\mb{1}_{3\times 1}\mb{1}_{1\times 3})\otimes(M)=M\mb{1}_{3\times 3}
\end{align}
and therefore:
\begin{align}
\bs R^{(mm)}_{\rm vv}[k]=\frac{M}{M-1}(\mathbf{v}_m[k]\mathbf{v}^H_m[k])\otimes(\mb{1}_{3 \times 3})
\end{align}
which accordingly means that:
\begin{align*}
rank(\bs R^{(mm)}_{\rm vv}[k])=rank(\mathbf{v}_m[k]\mathbf{v}^H_m[k])\times rank(\mb{1}_{3 \times 3})=1 
\end{align*}

Because:
\begin{align*}
rank(\bs R^{(mn)}_k)=rank((\bs R^{(mn)}_k)^H\bs R^{(mn)}_k)
\end{align*}
we analyze the rank of $(\bs R^{(mn)}_k)^H\bs R^{(mn)}_k$ here, where:
\begin{align}
\begin{split}
(\bs R^{(mn)}_k)^H\bs R^{(mn)}_k=&(\bs R^{(mn)}_{\rm iv}[k])^H\bs R^{(mn)}_{\rm iv}[k]+\\
&(\bs R^{(mm)}_{\rm vv}[k])^H\bs R^{(mm)}_{\rm vv}[k]
\end{split}
\label{eq.RhR1}
\end{align}
From the structure of \eqref{eq.homog1} during the quasi-steady state, we have:
\begin{align}
\begin{split}
\bs R^{(mn)}_{\rm iv}[k]&=\tilde{\mb{Y}}_{mn}(f_0,k)\bs R^{(mm)}_{\rm vv}[k]\\
\tilde{\mb{Y}}_{mn}(f_0,k)&=\overline{\bs Y}_{mn}(f_0,k)-{\bs Y}_{mn}(f_0,k)\text{diag}(\bs{\alpha}[k])
\end{split}
\label{eq.Ytilde}
\end{align}
Substituting \eqref{eq.Ytilde} in \eqref{eq.RhR1}, we have:
\begin{align}
(\bs R^{(mn)}_k)^H\bs R^{(mn)}_k=(\bs R^{(mm)}_{\rm vv}[k])^H\bs{\mathcal{Y}}_{mn}(f_0,k)\bs R^{(mm)}_{\rm vv}[k]
\end{align}
where:
\begin{align*}
\bs{\mathcal{Y}}_{mn}(f_0,k)=\tilde{\mb{Y}}^H_{mn}(f_0,k)\tilde{\mb{Y}}_{mn}(f_0,k)+\mathcal{I} 
\end{align*}
Since the linear transformation of $\bs R^{(mm)}_{\rm vv}[k]$ does not increase its rank and since it has already been shown that $\bs R^{(mm)}_{\rm vv}[k]$ is of rank-1 during the quasi-steady state, one can conclude that:
\begin{align}
\begin{split}
&rank(\bs R^{(mn)}_k)=rank((\bs R^{(mn)}_k)^H\bs R^{(mn)}_k)\leq 
 \\
 & rank(\bs R^{(mm)}_{\rm vv}[k]) \rightarrow  rank(\bs R^{(mn)}_k)=1
\end{split}
\end{align}   
\end{proof}
Hence:
\begin{align}
\begin{split}
&\bs R^{(mn)}_k \approx \sigma^{(mn)}_1[k]\bs u^{(mn)}_1[k] (\bs \nu^{(mn)}_1[k])^H\rightarrow \\
&\bs R^{(mn)}_k (\bs R_k ^{(mn)})^H\approx (\sigma^{(mn)}_1[k])^2\bs u^{(mn)}_1[k] (\bs u^{(mn)}_1[k])^H
\end{split}
\end{align}
where $\sigma^{(mn)}_1$ is the largest singular value of $\bs R^{(mn)}_k$, and $\bs{u}^{(mn)}_1$ and $\bs{\nu}^{(mn)}_1$ are the corresponding left and right singular vectors to that, respectively. Deviation from this subspace structure can indicate that the line is experiencing a transient. 
We can automate the detection of anomaly using this criterion by computing the following cost minimization and tracking the fast changes in $x[k]$ for each incidental line to a bus with $\mup$:
\begin{equation}
\begin{split}
x[k]=&\min_{\bs u} ||(\mathcal{I}_{6}-\bs u\bs u^H)\bs R^{(mn)}_k(\bs R^{(mn)}_k)^H||_{F}\\
&\mbox{s.t.}~||\bs u||=1
\end{split}
\label{eq:cost_one_upmu}
\end{equation}
In other words, $x[k]$ measures the size of the residual that $\bs R^{(mn)}_k$ has in the space orthogonal to the optimal $\bs u$, which should be zero in the quasi steady-state and non-zero otherwise. Since it has already been shown that $\bs R^{(mn)}_k$ is of rank-1 during the quasi steady-state, the left singular vector corresponding to the largest singular value of $\bs R^{(mn)}_k$ is the only  quantity of interest to compute the metric, instead of computing all the singular vectors. Therefore, also owing it to the small size of the matrix, the method is not computationally very expensive.

To conclude the local rules, a flowchart of the analysis performed at the local engine next to $\mup$ at bus $m$ is presented in Fig.~\ref{fig.local_flowchart}. At each instant of time, the phasor readings are received by the local engine and the introduced metrics above are calculated. The pre-processed data are then passed to the local rules to check for any violation. A violation could be trespassing pre-defined limits (e.g., the voltage magnitude rule, or maximum current magnitude limit) or fast changes in a metric with smooth behavior during the normal condition (e.g., quasi steady-state validity rule). Once a violation is found in one of the metrics, the start time is recorded. The search for anomaly continues until no new violation is found for a certain window of time ({\it{``Count1 $>T_1$''}}), and that specifies the end time of the anomaly. The type of anomaly is then determined based on the behavior of the data between the start time and the end time (e.g., active power surge, voltage interruption,...). The start time, end time and the anomaly label is then sent upstream for further analysis/visualization, and the parameters are reset for next event. If the number of detected violations related to a certain event passes a pre-defined threshold ({\it{``Count2 $>T_2$''}}), the end time is replaced with a \textit{``Persistent''} label, and the results are sent to the central engine, without waiting for the end of the event to arrive. The reason is to be able to inform the operator about the anomaly in time, and not waiting too long before something more damaging happens.               
\begin{figure}
\centering
\includegraphics[trim = 1mm 1mm 1mm 1mm,width=0.8\linewidth]{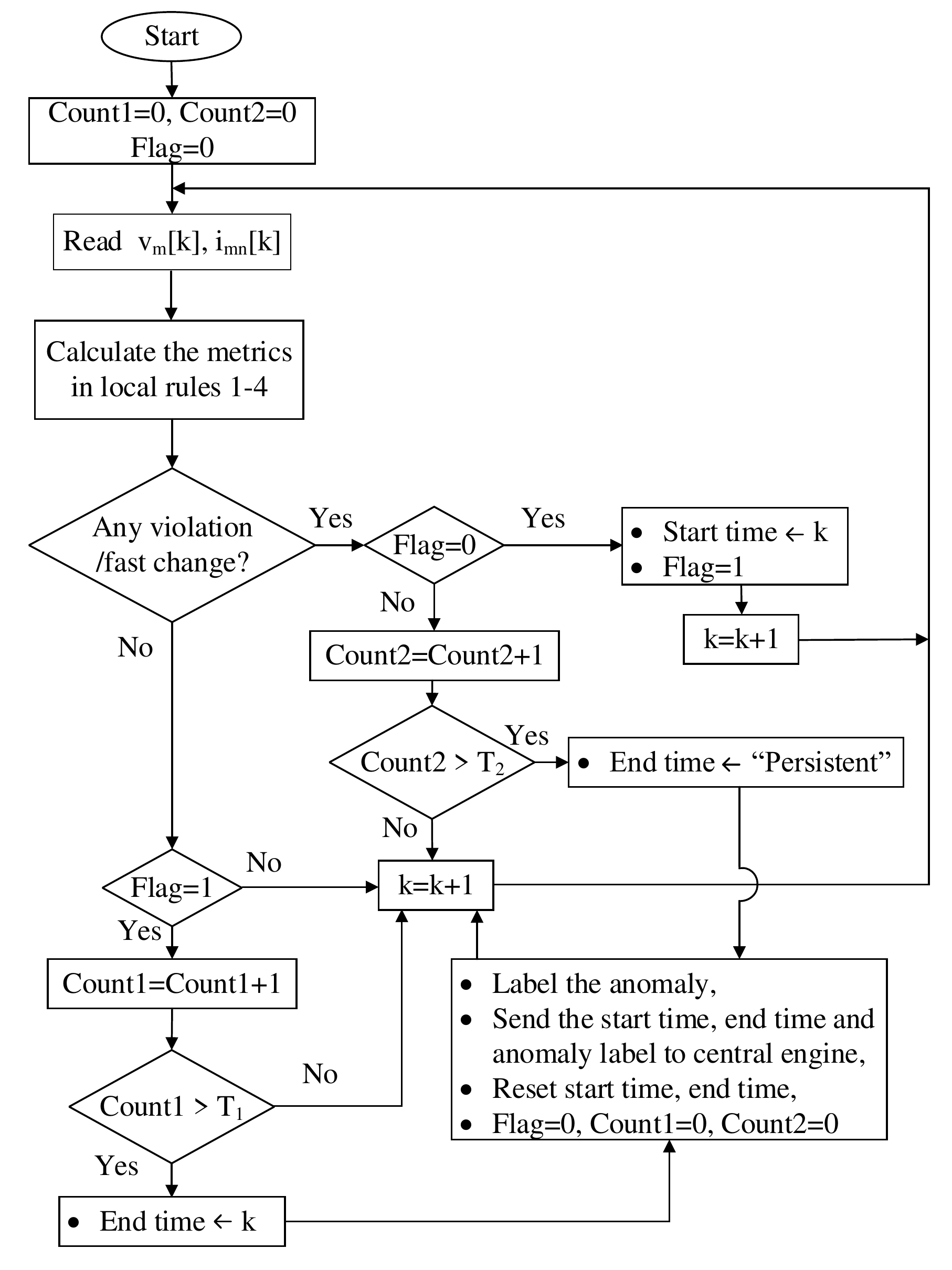}
\caption{Local Engine Analysis Flowchart Next to $\mup$ at Bus $m$.}
\label{fig.local_flowchart}
\end{figure}

\subsection{Central Rule}\label{sec.stage-i}
At higher levels of aggregation, the central engine in our case, the natural way to relate the measurements is through the grid interconnection. 
For a grid with $B$ buses, let $\mathbf{I}[k]$ denote the vector of three-phase current injection phasors with size $3B$, and $\mathbf{V}[k]$ represent the vector of three-phase voltage phasor at all the buses, which contains $3B$ elements. We define the measurement vector $\mathbf{d}[k]=(\mathbf{I}^T[k],\mathbf{V}^T[k])^T$.
During the steady-state, the following algebraic equation should hold:
\begin{align}
\mathbf{H}\mathbf{d}[k]
=\mathbf{0},~~~ 
\mathbf{H}&=\left( \begin{array}{c|c}
\mathcal{I}_{3B}  &-\bs{Y}_{3(B \times B)}
\end{array} \right)
\label{eq:grid_homogen}
\end{align} 
where $\bs{Y}$ is the admittance matrix of the grid that connects the current injection to the bus voltages. During the quasi steady-state these equations are close to be homogeneous. Since the distribution grid is unbalanced, and the lines are not transposed the set of equations that should be dealt with are three phase instead of working with positive sequence \cite{kersting2012distribution}. In this framework, we are also able to include the laterals in the admittance matrix by putting the entries corresponding to the phases that do not exist equal to zero.  

It is assumed that $\mathbf{H}$ is known, $K$ denotes the number of $\mup$s that are available and that each device has enough channels to measure the voltage and all incidental current measurements of the bus on which it is placed. Therefore, having a $\mup$ at bus $m$ means that the entries in $\mb{V}[k]$ corresponding to $\mb{v}_m[k]$  and entries in $\mb{I}[k]$ corresponding to $\sum_{n:m \sim n} \mb{i}_{mn}[k]$ are both available, where $m \sim n$ denotes that bus $m$ and $n$ are connected through a line. 
Let ${\bf T}$ denote a matrix parsing the vector $\mathbf d[k]$ into two parts corresponding to the unavailable measurements, $\mathbf{d}_u[k]$, and the available measurements, $\mathbf{d}_a[k]$:
\begin{equation}
{\mathbf T}=
\begin{pmatrix}
{\mathbf T}_u\\
{\mathbf T}_a
\end{pmatrix}
~\rightarrow~ 
{\mathbf T}\mathbf d=\begin{pmatrix}
{\mathbf d}_u\\
{\mathbf d}_a
\end{pmatrix},
~\mathbf{H}{\mathbf T}^T=
\left(\!\!
\begin{array}{c|c}
{\mathbf H}_u&{\mathbf H}_a
\end{array}\!\!
\right)
\end{equation}
where $K'=B-K$ and
\begin{align}
 \mb{T}_u \in \{0,1\}^{6(K' \times B)},~~
 \mb{T}_a \in \{0,1\}^{6(K \times B)}
\end{align}
$\mb{T}_u$ and $\mb{T}_a$ here are block diagonal matrices of size $6(K' \times B)$ and $6(K \times B)$ with entries equal to 0 or 1. The former is supposed to select the unavailable current and voltage phasors by having entries equal to 1 at corresponding locations, and the latter is supposed to pick the available phasors. Since ${\mathbf T}^T{\mathbf T}={\cal I}$, we can rewrite \eqref{eq:grid_homogen} as follows:
\begin{subequations}
\begin{align}
\label{eq:split_grid_homogen_1}
\mathbf{H}_u\mathbf{d}_u[k]&+\mathbf{H}_a\mathbf{d}_a[k]=\mathbf{0} \rightarrow \\
\mathbf{H}_a\mathbf{d}_a[k]&=-\mathbf{H}_u\mathbf{d}_u[k].
\label{eq:split_grid_homogen_2}
\end{align}
\end{subequations}
Let first assume that we have enough $\mup$s that satisfy $K > \frac{B}{2}$. Therefore, the matrix $\mb{H}_u$ would be a tall matrix and has a left null-space. Premultiplying both sides of \eqref{eq:split_grid_homogen_2} by the projector on the left null-space of $\mb{H}_u$, we have:
\begin{align}
(\mathcal{I}-\mb{H}_u \mb{H}^\dagger_u)\mathbf{H}_a\mathbf{d}_a[k]=\mb{0}
\label{eq:central_proj}
\end{align}
because $(\mathcal{I}-\mb{H}_u \mb{H}^\dagger_u)\mathbf{H}_u$=$\mb{0}$. The equality in \eqref{eq:split_grid_homogen_2} only holds during the steady-state (and for quasi steady-state with a good approximation), which means that equation \eqref{eq:central_proj} is homogeneous only in the steady-state (nearly-homogeneous in the quasi steady-state), and non-homogeneous otherwise. The following metric should, therefore, be close to zero only during normal operation and if $\mathbf{H}$
is unchanged:
\begin{align}
x[k]=\frac{||(\mathcal{I}-\mathbf{H}_u\mathbf{H}_u^\dagger)\mathbf{H}_a\mathbf{d}_a[k]||_2^2}{||\mb{d}_a[k]||_2^2}
\end{align}
However, in reality, the number of available $\mup$s, $K \ll \frac{B}{2}$. In this case, $\mb{H}_u$ is a fat matrix and, in general, is of full row rank, which in turn means that $(\mathcal{I}-\mathbf{H}_u\mathbf{H}_u^\dagger)$=$\mb{0}$, and our criterion becomes trivial. However, $\mb{H}_u\mb{H}^H_u$ has generally a high condition number, due to the weak grid connectivity. Therefore, both sides of \eqref{eq:split_grid_homogen_2} can be projected on the subspace spanned by the left singular vector, denoted by $\mb{u}_{u,s}$ corresponding to the smallest singular value of $\mb{H}_u$ expecting that $\mb{u}^H_{u,s}\mathbf{H}_u\mathbf{d}_u[k]$ to be small.
Accordingly, when the quasi steady-state is the regime of operation, i.e., when the equality in \eqref{eq:split_grid_homogen_2} holds, it is expected that the following function to be small and to vary smoothly:
\begin{align}
x[k]=\frac{|\mathbf{u}^H_{u,s}\mathbf{H}_a\mathbf{d}_a[k]|^2}{||\mathbf{d}_a[k]||^2}.
\label{eq:xopt2}
\end{align}
The exit from this behavior is then marked as an anomaly in the central engine 
and $x[k]$ is the quantity that is proposed to be tracked for fast changes for this purpose.

The flowchart for the central engine is similar to the one for local engine to some extent except that all the $\mup$ readings are required to form the central metric, and the results from the local analysis are received by the central engine, and the analysis results from this stage are not shipped anywhere, and are ready to be displayed for the operator. 
\subsection{Fast Change Detection Method}
\label{sec:change_det}
As explained above, some of the criteria defined in local and central rules require tracking fast changes in the quantities that are defined, because severe variations in $x[k]$ are signatures of an anomaly. From real data and simulations we have verified that variations in the mean value for these quantities during the quasi steady-state regime are extremely smooth. This observation prompted us to consider changes in their mean value as the common statistical trade-mark of anomalies in all of these quantities and to use the sequential two-sided Cumulative Sum (CUSUM) algorithm \cite{page1954continuous,basseville1993detection} as a heuristic. The use of this algorithm amounts to approximating the samples, for all the aforementioned quantities, as outcomes of a Gaussian non-zero mean process with independent observations. Although the observations are in fact temporally correlated, our objective (i.e. fast change detection in mean) justifies the relaxation that the random process has independent observation samples. The algorithm decides between two hypotheses $\mathcal{H}_0$: {\it no change in the mean}, or $\mathcal{H}_1$: {\it change in the mean}, at time $k$.

We expect to see multiple change points during an event. Detection of multiple change points is achieved by resetting the decision functions and cumulative sums to zero after the change is detected, and continuing the inspection of upcoming samples. The fast change anomaly is completed if no new changes are detected for a defined window of time.
\section{Optimal $\mup$ Placement}
In tandem with the anomaly detection, the criterion that was described for the central engine can be the basis to determine an optimal placement for the $\mup$s. As it was mentioned, the challenge here is that we cannot feasibly nor practically deploy $\mup$s at all nodes in the distribution system, therefore we consider a limited deployment, where $K \ll \frac{B}{2}$.  

Ideally, we want the matrix $\mb{H}_u$ to have a left null-space, i.e., the criterion in \eqref{eq:xopt2} to be zero in the steady-sate. Considering the high condition number of $\mb{H}_u \mb{H}^\dagger_u$, a reasonable approach to find the optimal configuration is to minimize the norm in \eqref{eq:xopt2} over all the possible placement configurations. 
Let first rewrite the defined metric in \eqref{eq:xopt2}:
\begin{align}
x[k]=\frac{\mathbf{d}_a^H[k]\mathbf{H}^H_a \mb{u}_{u,s}\mb{u}^H_{u,s} \mathbf{H}_a\mathbf{d}_a[k]}{||\mb{d}_a||^2}.
\end{align} 
We desire our formulation to be only topology-dependent. Therefore,  the optimal placement problem is formulated as a min-max optimization with the following structure:    
\begin{align}
\label{eq.optimization}
\begin{split}
\mathbf {\Pi}^\text{opt}=&~
\mbox{arg}\!\min_{\mathbf \Pi}~~ \lambda_{\max}(\mb{W}) 
\\
\mbox{s.t.}
~~&\left(\!\!
\begin{array}{c|c}
{\mathbf H}_u&{\mathbf H}_a
\end{array}\!\!
\right)=\mathbf H \left(\!\!
\begin{array}{c|c}
{\mathbf T}^T_u&{\mathbf T}^T_a
\end{array}\!\!\right)
\\
&\mb{T}=\mathcal{I}_2 \otimes (\mb{\Pi} \otimes \mathcal{I}_3),
\\
&\mb{W}=\mathbf{H}^H_a \mb{u}_{u,s}\mb{u}^H_{u,s} \mathbf{H}_a
,~~~[\mathbf{\Pi}]_{i,j} \in \{0,1\}\\
&
\sum_{j} [\mathbf{\Pi}]_{i,j}=1,~~~\sum_{i} [\mathbf{\Pi}]_{i,j} = 1
\end{split}
\end{align} 
Since $\lambda_{\max}(\mb{W})=\max_{\mb{d}_a[k]}\frac{\mathbf{d}_a^H[k] \mb{W} \mathbf{d}_a[k]}{||\mb{d}_a[k]||^2}$, essentially we are choosing a placement that minimizes the maximum value that our objective function can take over possible set of available measurement vectors $\mb{d}_a[k]$.

An exhaustive search is required to find the global optimum of the optimization problem in  \eqref{eq.optimization}, which is exponentially complex, and therefore does not scale well. This becomes a barrier when the size of the grid is large, i.e., in most of the real grids. Therefore, we propose to employ a \textit{``Greedy Search''} as an alternative to reduce the time complexity to \textit{polylog}, while accepting to be near-optimal. The pseudo-code of the employed greedy search is illustrated in Algorithm.~\ref{alg.greedy}.
\SetKwBlock{Init}{Initialization}{end}
\begin{algorithm}
\Init{
$K$ := Number of $\mup$s\;
$\mathcal{P}:=\emptyset$, //~Set of selected placement locations\;
$\mathcal{L}:=$~Set of candidate placement buses\;
} 
\Begin {
\For{n=1..K}{
$Cost \leftarrow \inf$\;
\For{each $l \in \mathcal{L}$}{
$\mathcal{P}:=\mathcal{P} \cup \{l\}$\;
given $\mathcal{P},\text{ calculate } \lambda_{\max}(\mb{W})$\; 
\If{$\lambda_{\max}(\mb{W})<Cost$}{
$l_{opt} \leftarrow l$\;
$Cost \leftarrow \lambda_{\max}(\mb{W})$\;} 
$\mathcal{P}:=\mathcal{P} \setminus \{l\}$\;}
$\mathcal{P}:=\mathcal{P} \cup \{l_{opt}\}$\;
$\mathcal{L}:=\mathcal{L} \setminus \{l_{opt}\}$\;}
}
\caption{Greedy Search Pseudo-Code for Optimal $\mup$ Placement.}

\label{alg.greedy}
\end{algorithm}


\section{Numerical Results}
\label{sec:res}
In this section, we first find the optimal placement for our $\mup$s based on \eqref{eq.optimization} and then test the effectiveness of the proposed anomaly detection criteria through simulated data and real data, provided from the $\mup$s that are installed in our partner utility medium voltage (12.47 kV) grid.
\subsection{Synthetic Data}
The IEEE-34 bus test case \cite{ieee34} is simulated using the time-domain simulation environment of DIgSILENT \cite{manual2009version} that deals with differential equations rather than memory-less equations. The sampling rate is selected to be equal to the sampling rate of the Analog-to-Digital Converter (ADC) in a real $\mup$, which is 512 $\times$ 60 Hz =30720 samples per sec.. We then processed these time-domain data through our phasor estimation algorithm, that emulates a two-cycle, P-class algorithm based on the IEEE C37.118.1 \cite{c37} producing phasor samples at a rate of 120 Hz. The single-line diagram of the test case is shown in Fig.~\ref{fig.ieee34}. This case includes single-phase laterals, voltage regulators, and untransposed lines, which all are modeled exactly in our admittance matrix.    
\begin{figure}[ht]
\centering 
\includegraphics[trim = 2mm 2mm 2mm 2mm,width=0.5\textwidth]{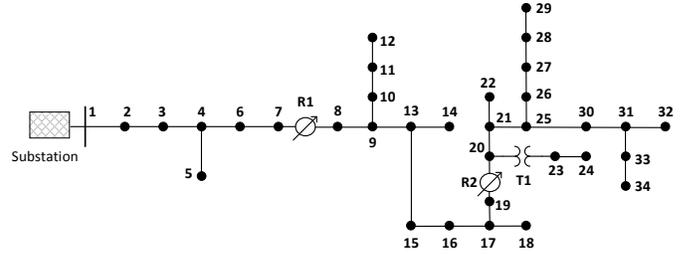}
\caption{IEEE 34-Bus Test Feeder Single-Line Diagram}
\label{fig.ieee34}
\end{figure}

Table.~\ref{tab.opt} compares the objective value for a random placement, \textit{``Greedy Search''} and \textit{``Exhaustive Search''}, and the time complexity of each solver, assuming that $K=3$ $\mup$s are available. 
\begin{table}[htbp]
\caption{IEEE-34 Case Optimal $\mup$ Placement Result for $K=3$}
\label{tab.opt}
\begin{center}
\begin{tabular}{|c|c|c|c|}
\hline
& \bf{Random} & \bf{Greedy} & \bf{Exhaustive}\\
\hline \hline
\bf{Optimum Cost} & 1.7085 &0.51477&0.51477 \\ \hline
\bf{Buses with $\mup$s} &\{1,3,9\} &\{7,19,31\}&\{9,19,31\}\\ \hline 
\bf{Run Time} & --&2.84 s&290.266 s\\ \hline 
\end{tabular}
\end{center}
\end{table}     
The objective value of the \textit{``Greedy Search''} and the \textit{``Exhaustive Search''}, and the set of the selected buses are close to each other, while the run time of the \textit{``Greedy Search''} is 102.206 times faster. Hence, the \textit{``Greedy Search''} can be a very good choice to solve our optimal placement problem.         

As expected, the placement rule tries to scatter the available $\mup$s all over the grid, in order to achieve the maximum possible coverage. 

In order to investigate how the time complexity grows, and also analyze the results of the placement criterion, the 123 standard test case in \cite{ieee34} was used, considering 20 $\mup$s available (i.e., $K=20$). Without loss of generality, the 123 test feeder was reduced to 70 buses to only include three-phase lines, and roll up all the laterals (single-phase and two-phase lines). The reasoning behind this reduction is that visibility on the main feeder is more important for us than the the visibility on the laterals, considering the limited number of $\mup$s. Using a  machine with 60 Intel(R) Xeon(R) CPU E7-4870 v2 @ 2.30GHz cores, it took 11.56 Sec for the algorithm to place 20 $\mup$s over 70 buses. Fig.~\ref{fig.placement123} illustrates the location of the optimally-placed $\mup$s in the reduced grid. It can be observed from the figure that the $\mup$s are scattered over the grid to achieve the maximum sensitivity with respect to different locations of anomalies. 
\begin{figure}[ht]
\centering 
\includegraphics[width=0.5\textwidth]{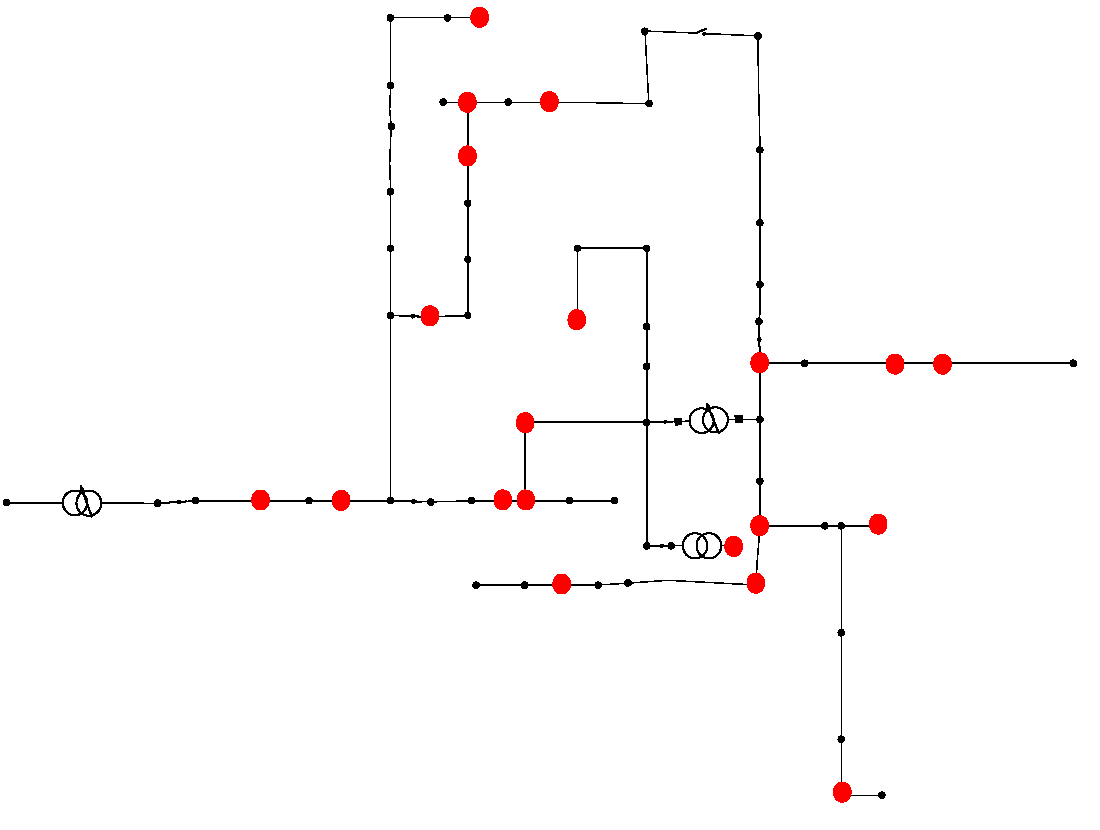}
\caption{ Optimally-Placed $\mup$s on Reduced IEEE 123 Test Feeder.}
\label{fig.placement123}
\end{figure}      

\subsubsection{Single-Line to Ground Fault-IEEE 34 Bus} In this simulation, the rules are tested for detecting anomaly with respect to a
Single-Line to Ground Fault (SLGF), which is a very common type of short-circuit fault in the distribution grid. A SLGF was introduced on ``Phase a'' of line (25,26), which then caused the fuse placed on the phase a of this line near bus 25 to melt down. Our three $\mup$s are placed on buses, 7, 19, and 31 based on the ``greedy search'' result for the optimal placement criterion. 

The results of the \textit{``voltage magnitude change''} rule is shown in Fig.~\ref{fig.sim_vol_mag} for $\mup$s 7, and 19 for instance. Simply, the rule inspects the data for large deviations, label them accordingly, and find the start time and the end time of the event, marked with blue and red stars.
  
\begin{figure}[ht]
\centering 
\includegraphics[width=0.5\textwidth,height=4 cm]{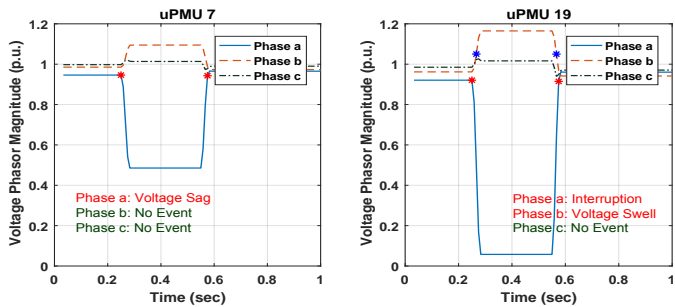}
\caption{ Voltage Magnitude Local Rule Result for SLGF.}
\label{fig.sim_vol_mag}
\end{figure}
Fig.~\ref{fig.sim_SS_local} illustrates the metric value in \eqref{eq:xopt2} for the specified lines that is inspected with $M=12$ to check whether the grid is in the quasi steady-state or not, where the voltage and current data are first converted to per-unit system assuming $S_b=1$ MVA. The start time of the detected changes are also marked, setting the CUSUM detector parameters fixed in all the three local engines corresponding to each $\mup$. As it can be observed, there are two periods in which the grid manifests its dynamic: the first one corresponds to the occurrence of the fault and the second matches with the fuse meltdown. In addition, based on the severity of the transient that each $\mup$ measures, the number of detected changes via CUSUM varies. In this case, considering the location of the $\mup$s and the location and type of the fault, the most severe change appears in the metric corresponding to measurements from line (19,20), while the changes in metric for line (31,32) is very small. Therefore, based on the defined parameters on the detector, CUSUM finds quite a large number of change points in the former, while it is not set to be sensitive to the changes in the order that appears in the latter. In fact, if the detector is set to be too sensitive, it can increase ``\textit{false alarms}'' in the system. 
Also note that due to the two-cycle calculation of the phasor, and use of $M$ samples to calculate the correlation matrix, the event appears and disappears with a systematic delay.      
\begin{figure}[ht]
\centering 
\includegraphics[width=0.5\textwidth]{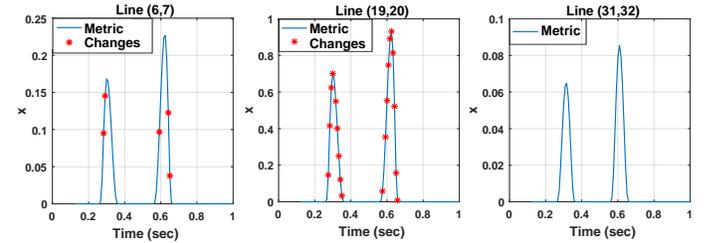}
\caption{Quasi Steady-State Validity Checking for SLGF.}
\label{fig.sim_SS_local}
\end{figure} 
The other local rules also capture the anomaly, though with different severity and behavior (we just show some of the results here for lack of space). In fact, many of the local rules may detect the same event, though some rules are more informative than others depending on the cause. Each triggered rule reports a start and an end time for the event. Storing these time-tags for eventful segments of data helps understanding their relationship.

The metric defined for our central engine is also illustrated in Fig.~\ref{fig.sim_SS_cent}. The delay in appearing and disappearing of the event in here is solely due to the two-cycle phasor calculation.
\begin{figure}[ht]
\centering 
\includegraphics[trim = 2mm 2mm 2mm 2mm,width=0.75\linewidth,height=4 cm]{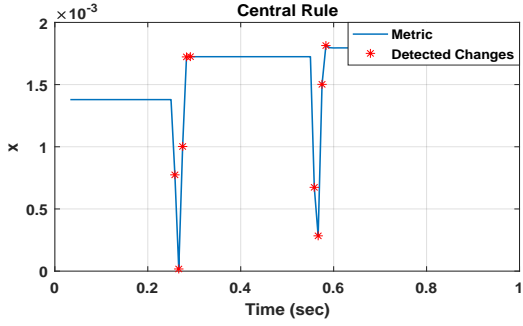}
\caption{Central Rule Inspection for SLGF.}
\label{fig.sim_SS_cent}
\end{figure} 

It should be noted that the value of the metric in \eqref{eq:xopt2} during the quasi steady-state is highly dependent on the number of $\mup$s and the topology of the grid that together determine the structure of $\mb{H}_u$ and $\mb{H}_a$. 
However, the best placement of the $\mup$s for a certain topology and number of $\mup$s that makes the metric to be as close as possible to zero, returns a certain objective value that can be used as the baseline to determine what is a normal value for $x[k]$ and what would be an anomaly.   
\subsubsection{Discussion--Compromised Data} 
We illustrate here the resilience of the architecture to data injection attacks. 
For this purpose, three data attack scenarios are investigated happening concurrently with the SLGF event discussed previously. We consider the case of the attacker manipulating the data of $\mup$ 7 in the first case, and $\mup$ 19 in the second case, and finally $\mup$ 7 and 19 at the same time in the third case on their way to the central engine. In all cases the data injected are a {\it replay} of the last available data set before the anomaly starts. Setting the change detector parameters fixed for all the three cases, Fig.~\ref{fig.sim_central_compromised} shows, for each case, the central rule and the start time of the detected changes. As can be seen, since $\mup$ 19 is playing an important role for this event, having $\mup$ 7 compromised will not affect our central rule significantly (case-1). However, when the $\mup$ 19 is compromised, the number of detected changes reduce significantly (case-2), and when both $\mup$ 7 and 19 are compromised and the only healthy data is coming from $\mup$ 31, the detector does not pick any fast changes based on the set parameters. This also reveals the importance of tuning the detector thresholds to have a certain ``false alarm,'' while maximizing the ``detection'' probability.   
\begin{figure}[ht]
\centering 
\includegraphics[trim = 0mm 0mm 0mm 0mm,width=0.5\textwidth]{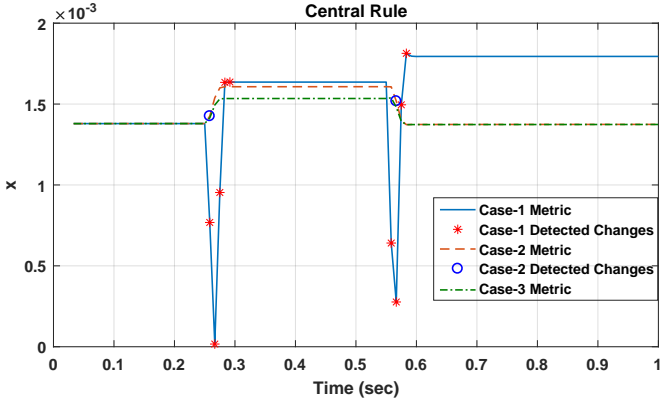}
\caption{Central Rule for SLGF with Manipulated $\mup$ Data.}
\label{fig.sim_central_compromised}
\end{figure}          
We wish to remark that in all the three cases, the local analytics that directly draw data from $\mup$s will still flag the alarm, so buffering locally at the site of the event these data can be an important way of helping understand what communications were compromised in an ex-post analysis.
\subsubsection{Discussion--Optimal vs Non-Optimal Placement} 
As it was mentioned, the placement criterion tries to scatter the available $\mup$s over the grid to achieve the maximum coverage, and therefore make the central rule more sensitive to anomalies. 

In order to compare the performance of an optimal versus non-optimal placement, a load loss event is created on bus 24 of IEEE-34 test case at $t=0.4$s. The $\mup$s are placed based on the random placement and greedy search result given in Table.~\ref{tab.opt}, corresponding to non-optimal and optimal placement, accordingly. Fig.~\ref{fig.greedyVSrandom} shows the central metric for these two cases. Since the relative change is what matters to our detector, the metrics corresponding to two placement are brought on the same scale. As it can be observed, since the $\mup$s in the random placement are concentrated at a certain area, events like this would not be very pronounceable in the central metric, which could possibly lead to a ``false-negative''.
\begin{figure}[ht]
\centering 
\includegraphics[trim = 2mm 2mm 2mm 2mm,width=0.75\linewidth]{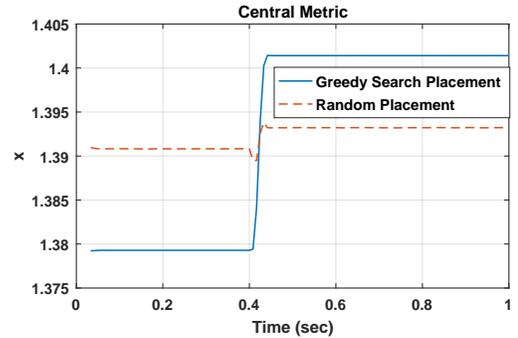}
\caption{Central Metric Change of a Load Loss Event; Optimal versus Non-optimal Placement.}
\label{fig.greedyVSrandom}
\end{figure}   
\subsection{Real Data}
Fig.~\ref{fig.rpu} shows the abstract one-line diagram of the partner utility grid and the location of the installed $\mup$s. The installed $\mup$s sample the voltage and current with a rate of $512 \times 60$~Hz, and output the estimated phasors at 120 Hz rate. These devices achieve an accuracy of 0.001 deg resolution for phasor angle, 0.0002\% for phasor magnitude, and 0.01\% for Total Vector of Error (TVE) \cite{upmu_site}. The two feeders here are connected through the subtransmission grid. 
\begin{figure}[ht]
\centering 
\includegraphics[trim = 0mm 0mm 0mm 0mm,width=0.5\textwidth]{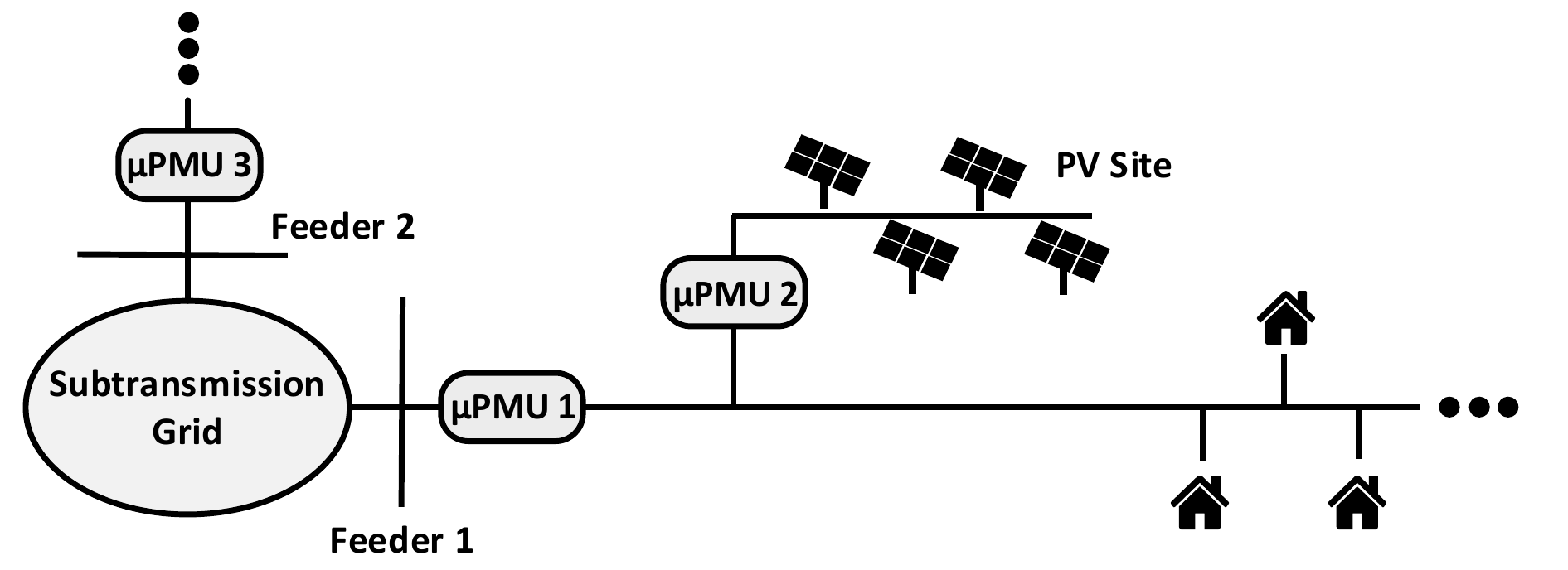}
\caption{Location of Installed $\mup$s in Our Partner Utility Grid.}
\label{fig.rpu}
\end{figure}   

Fig.~\ref{fig.voltage_sag_05_jan} shows the voltage magnitude change rule inspected on the data from these $\mup$s over a certain time.  
\begin{figure}[ht]
\centering 
\includegraphics[width=0.5\textwidth,height=6 cm]{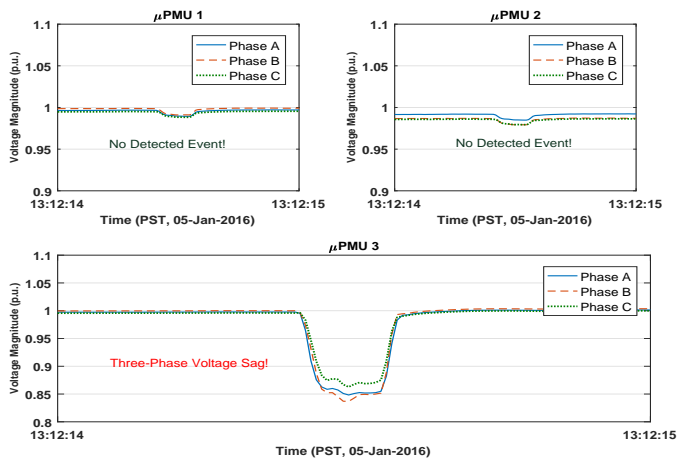}
\caption{Voltage Magnitude Change Rule for Real Data.}
\label{fig.voltage_sag_05_jan}
\end{figure}
The results of the fast change inspection on the current magnitude of phase a and instantaneous local frequency at the substation bus of feeder 2 are also illustrated in Fig.~\ref{fig.freq_current_05_jan}(a) and Fig.~\ref{fig.freq_current_05_jan}(b), respectively. 
\begin{figure}[ht] 
\centering
    \includegraphics[width=0.5\textwidth,height=4 cm]{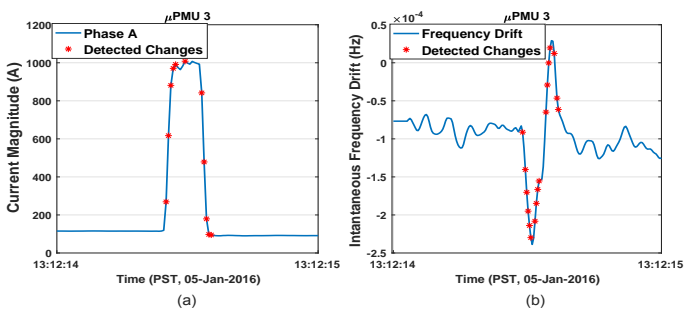}
    \caption{Fast Change Tracking of Current Phasor Magnitude and Bus Instantaneous Frequency Drift for Real Data.}
    \label{fig.freq_current_05_jan}
\end{figure} 
Observing the results of analysis, the DSO can deduce that the cause of the event is most probably located on feeder 2. Also, from the pre- and post-anomaly value of the current magnitude, the DSO can conclude that some of the loads on this feeder tripped due to the voltage sag. 
We just showed the results from some of the rules due to the space limit but other rules can also flag the existence of anomaly on feeder 2.

All the metrics introduced and tested above are designed considering the specifications of a distribution grid. It should be noted that not all the proposed methods in the literature for transmission grid are applicable in the distribution side. For example, the phase angle difference in the distribution grid is known to be much smaller than that in the transmission level. Therefore, as opposed to transmission grid that this metric would work well for event detection \cite{allen2014pmu}, it might not be a proper metric to look at in the distribution grid, since the signal to noise ratio might be small. The example next illustrates how the voltage angle difference metric could have failed if it was used as a ``local rule''. 
Using the data from two $\mup$s installed at two ends of a line at a second utility grid (not the one in Fig.~\ref{fig.rpu}), the voltage magnitude captured by the $\mup$ at one end of the line is shown in Fig.~\ref{fig.ang_diff}(a) and the voltage phasor angle difference between the two $\mup$s at two ends of the line is shown in Fig.~\ref{fig.ang_diff}(b). As it can be observed, the important event in this period corresponds to the two voltage sags. However, the angle difference shows a significant number of spikes without clearly marking these two events with the same significance. Passing this metric to our detector, we would pick too many fast changes that do not represent any specific event of interest, which therefore means an increase in the number of ``false positives''. All streams of interest were examined during this period, including both the active and reactive power, and all anomalies detected were in agreement with those visible in Fig.~\ref{fig.ang_diff}(a)

\begin{figure}[ht] 
\centering
\includegraphics[trim = 2mm 2mm 2mm 2mm,width=0.5\textwidth]{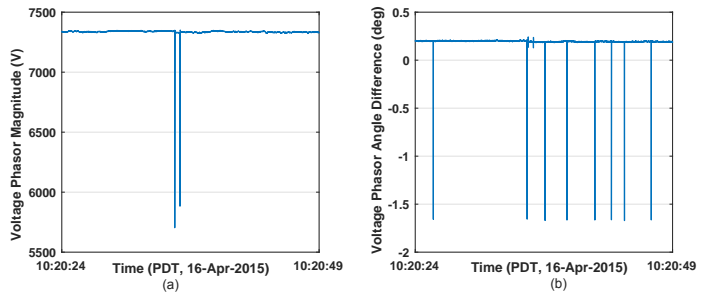}\caption{Voltage Phasor Angle Difference Between Two $\mup$s at Distribution Grid.}
\label{fig.ang_diff}
\end{figure}  
\subsubsection{Case Study-Robustness Against Volatility}
The increasing presence of renewable resources, like wind and solar, in the distribution grid  may raise some concerns about the robustness of the proposed rules to ``false alarms'' due to their inherent volatility. However, our method is based on an adaptive estimation of the data mean using an exponential window and, as shown in the following numerical example, this makes the algorithm capable to only pick the fast changes over a very short period of time, while being insensitive to changes in the time order of normal renewable resources fluctuations. To show it numerically, we refer to two real events captured by $\mup$ 2 in the grid in Fig.\ref{fig.rpu}. Fig.~\ref{fig.active_power_PV}(a) shows a dramatic change of about -18.25\% over 9 seconds in the PV site active power injection due to the cloud effects, and Fig.~\ref{fig.active_power_PV}(b) shows a step change in the active power after the PV site went  out of service. Setting the detection parameters to be the same for both events, it can be seen that the detector flags no event for the first case and finds multiple of fast changes in the second. This is ideal since the events of our interest in the this context are of the second type, which should be discriminated from the normal quasi steady-state behavior in the first case.    
\begin{figure}[ht] 
\centering
    \includegraphics[width=0.5\textwidth,height=4 cm]{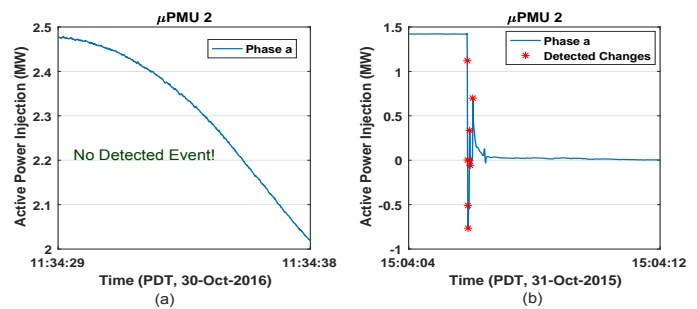}
	\caption{Fast Change Tracking of Injected Active Power in the PV Site.}
    \label{fig.active_power_PV}
\end{figure}

Considering the existence of different types of loads in the distribution grid, the detector should also be able to differentiate between possible normal fluctuations in the load profile and those that are caused by a rare event. To show the performance of CUSUM for this case, real data of a $\mup$ installed behind a building with a non-linear load is used. As it can be seen in Fig.~\ref{fig.current_bank514}, the fast change tracker of the current magnitude does not pick the fluctuations due to the non-linear load behavior as an event and only flags a segment of data that corresponds to a voltage sag in the grid as an eventful segment.
\begin{figure}[ht] 
\centering
    \includegraphics[width=0.5\textwidth,height=4 cm]{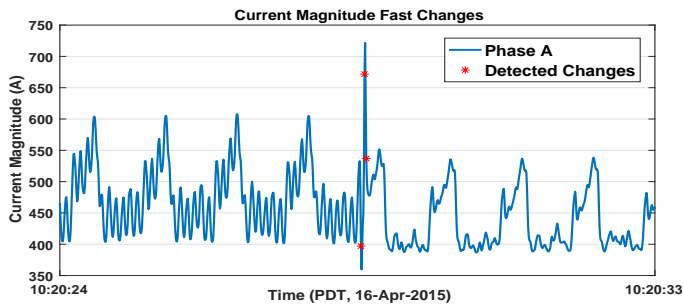}
	\caption{Fast Change Tracking of Current Magnitude for a Non-Linear Load.}
    \label{fig.current_bank514}
\end{figure}

\section{Conclusion and Future Direction}
In this paper, a hierarchical \textit{``anomaly detection architecture''} has been described that integrates high resolution, synchronous $\mup$ data to attain high accuracy in the detection of the physical effects of an event. 
A set of rules has been formulated to process the $\mup$s' data at the local and central level of our hierarchical architecture. Exiting from the quasi steady-state regime is a common signature of many anomalies, and this insight is exploited to design a key part of the proposed metrics. To ensure scalability of our architecture, the rules at the local stage are designed to be agnostic about the grid interconnection and physical location of the sensor. Depending on the type of event, the effects on different metrics has different severity and, therefore, inspecting different metrics at the local level reduces the miss-detection rate. The analysis in the higher stages is designed to bind the readings from different $\mup$s using the knowledge about the grid topology and sensor locations in search for eventful segments.    

Using real data and simulations, we have shown how processing real-time physical measurements from $\mup$s enables us to detect effectively anomalies in the system and inform the operator about the grid status. 
Having a limited number of $\mup$s, an optimal placement criterion is also designed with respect to the central rule that aims to scatter $\mup$s over the grid to achieve maximum coverage and therefore to increase the sensitivity of the rule to different event source locations.

Having an anomaly identified, our future efforts will focus on identifying and localizing the source of the anomaly using the underlying physical model when possible or, in general, at least confine the search region to a smaller set of components/lines. This include the identification of any change in the admittance matrix, due to loss of a line, reconfiguration of the grid, and etc.    
\bibliography{bib}
\vspace{-0.1cm}
\bibliographystyle{IEEEtran}
\end{document}